\newcommand{\E}{\mathcal{E}}
\newcommand{\F}{\mathcal F}
\newcommand{\Lie}{\mathcal L}
\newcommand{\SE}{{\mathcal S}_\E}
\newcommand{\LCM}{\nabla}
\newcommand{\LCS}{\mathcal D}
\newcommand{\SFS}{\chi}
\newcounter{example}[section]
\newcounter{remark}[section]
\newcounter{theorem}[section]
\newcounter{proposition}[section]
\newcounter{lemma}[section]
\newcounter{corollary}[section]
\newcounter{definition}[section]
\def\theremark{\arabic{section}.\arabic{remark}}
\def\thetheorem{\arabic{section}.\arabic{theorem}}
\def\thedefinition{\arabic{section}.\arabic{definition}}
\renewcommand*{\email}[1]{\footnote{Electronic address: \href{mailto:#1}{\nolinkurl{#1}} }}
\newenvironment{proof}{\noindent {\textit{Proof:}}
}{\medskip}
\newenvironment{theorem}{\refstepcounter{theorem}
\medskip\noindent{\bf Theorem \thetheorem}:\em}{\medskip}
\newenvironment{definition}{\refstepcounter{definition}\medskip\noindent{\bf
Definition \thedefinition}:\em}{\medskip}
\begin{document}

\title{The geometry of massive particle surfaces}
\author{Kirill Kobialko\email{kobyalkokv@yandex.ru}}
\author{Igor Bogush\email{igbogush@gmail.com}}
\author{Dmitri Gal'tsov\email{galtsov@phys.msu.ru}}
\affiliation{Faculty of Physics, Moscow State University, 119899, Moscow, Russia}

\begin{abstract}
We propose a generalization  of Claudel, Virbhadra, and Ellis photon surfaces to the case of massive charged particles, considering a timelike hypersurface  such that any worldline of a particle with mass $m$, electric charge $q$ and fixed total energy $\E$, initially touching it, will remain in this hypersurface forever. This definition does not directly appeal to the equations of motion, but instead make use of {\em partially umbilic} nature of the surface geometry. Such an approach should be especially useful in the case of non-integrable equations of motion. It may be applied in the theory of non-thin accretion discs, and also may serve a new tool for some general problems, such as uniqueness theorems, Penrose inequalities and hidden symmetries. The condition for the stability of the worldlines is derived, which reduces to differentiation along the flow of surfaces of a certain energy. We consider a number of examples of electrovacuum and dilaton solutions, find conditions for marginally stable orbits, regions of stable or unstable spherical orbits, stable and unstable photon surfaces, and solutions satisfying the no-force condition.

\end{abstract}


\maketitle

\setcounter{page}{2}

\setcounter{equation}{0}
\setcounter{subsection}{0}

\section{Introduction}

The recent publication of an image of a black hole at the center of our Galaxy \cite{EventHorizonTelescope:2022xnr} and Galaxy M87 \cite{EventHorizonTelescope:2019dse} stimulates further interest in shadows and relativistic images of black holes as a tool for searching for new physics in the Cosmos. The shadows of black holes and images of accretion disks around them are now released in the radio and optical ranges and they are associated with the properties of null geodesics in strong gravitational fields \cite{Cunha:2018acu, Dokuchaev:2019jqq,Bronzwaer:2021lzo,Perlick:2021aok,Virbhadra:2022ybp,Virbhadra:2022iiy,Dokuchaev:2020wqk,Bogush:2022hop}. The most important role in the mathematical understanding of shadows is played by the concept of characteristic photon surfaces \cite{Claudel:2000yi,Gibbons:2016isj,Yoshino1,Yoshino:2019dty,Galtsov:2019fzq,Galtsov:2019bty,Koga:2020akc,Shoom:2017ril,Kobialko:2020vqf,Kobialko:2021uwy,Cao:2019vlu} -- timelike hypersurfaces, where bounded photon trajectories are lined up. 

Schwarzschild solution is known to posses a photon sphere \cite{Virbhadra:1999nm}. Later, this concept was generalized by Claudel, Virbhadra, and Ellis to photon surfaces \cite{Claudel:2000yi}. Any null geodesic touching this surface tangentially entirely belongs to the photon surface. An important property of the photon surfaces is established by the theorem asserting that these are timelike totally umbilic hypersurfaces in spacetime \cite{Chen}, i.e., establishes proportionality of their first and second fundamental forms. This purely geometric property can serve as a constructive definition for analyzing photon surfaces instead of solving geodesic equations and plays a decisive role in the analysis of the black hole uniqueness \cite{Cederbaum,Yazadjiev:2015hda,Yazadjiev:2015mta,Yazadjiev:2015jza,Yoshino:2016kgi,Yazadjiev:2021nfr,Koga:2020gqd,Rogatko,Cederbaumo,Cederbaum:2019rbv} and area bounds \cite{Shiromizu:2017ego,Feng:2019zzn,Yang:2019zcn}. Photon surfaces find applications in the description of optical properties and gravitational shadows \cite{Grover:2017mhm,Shipley:2019kfq,Grenzebach,Tsukamoto:2020iez,Tsukamoto:2020bjm,Tsukamoto:2021fsz,Tsukamoto:2021lpm,GrenzebachSBH,Grenzebach:2015oea,Konoplya:2021slg,Lan:2018lyj}.

Here we explore a new kind of characteristic surfaces around black holes, which have the property of holding the worldlines of massive particles, including charged ones. Such surfaces make it possible to understand the geometry of ``massive shadows'' due to such particles, which certainly exist in the vicinity of black holes (electrons, neutrinos, etc.). Such shadows are not directly observable, but their existence can be detected due to modulation of the plasma heating which will translate them into the observable ranges of electromagnetic radiation from radio to optics. 

In the case of massive particles, the characteristic surface is not a conformally invariant and not a totally umbilic hypersurface, but obeys a new partially umbilic condition. This is exactly the same geometric property that characterizes fundamental photon surfaces in stationary spacetimes \cite{Kobialko:2020vqf,Kobialko:2021uwy}. A new type of partially umbilic surfaces form spacetime foliations (locally) parameterized by the values of the energy of scattered particles and has a number of geometric properties similar to those of photon and fundamental photon surfaces. Due to this, we can expect that many classical results for photon spheres can be generalized to this case as well. In particular, we can expect existence of the restrictions on the spatial sections of such surfaces -- the so-called Penrose inequalities. Moreover, one can expect to obtain uniqueness theorems and relations to Killing tensors of second rank, which describes the hidden symmetries of spacetime \cite{Kobialko:2021aqg,Kobialko:2022ozq,Gibbons:2011hg,Cariglia:2014ysa,Frolov:2017kze}. 

The paper plan is the following. In Sec. \ref{sec:setup} we briefly describe the equations of motion for charged massive particles in spacetimes with a Killing vector and conventions of the hypersurface geometry. In Sec. \ref{sec:surfaces} we present definition of massive particle surfaces, a key theorem, and a discussion of the geometric and physical properties. In Sec. \ref{sec:examples} we apply the developed formalism to many important particular examples. The appendix contains proofs of some statements formulated in the main part of the paper.

\section{Setup} 
\label{sec:setup}

Let $M$ be a Lorentzian manifold \cite{Chen} of dimension $n\geq4$ with metric tensor $g_{\alpha\beta}$, Levi-Civita connection $\LCM_\alpha$. In addition to the metric tensor describing gravity, we introduce the electromagnetic potential $A_\alpha$ and the electromagnetic field tensor $F_{\alpha\beta}=\LCM_{[\alpha} A_{\beta]}$\footnote{We use the convention of the symmetrization and antisymmetrization with unit weight: $T_{(\alpha\beta)} = T_{\alpha\beta} + T_{\beta\alpha}$, $T_{[\alpha\beta]} = T_{\alpha\beta} - T_{\beta\alpha}$.}. The worldline $\gamma^\alpha$ of test particles with charge $q$ and mass $m$ in this geometry obeys the following equations  
\begin{equation}
   \dot{\gamma}^\alpha \LCM_{\alpha}\dot{\gamma}^\beta =q F^\beta{}_{\lambda}\dot{\gamma}^\lambda , \quad \dot{\gamma}^\alpha\dot{\gamma}_\alpha=-m^2,  \label{eq_particles}
\end{equation} 
where $\dot{\gamma}^\alpha=d \gamma^\alpha/d s$ is a four-velocity of the particle, and $s$ is an affine parameter. Case $q=0$ describes neutral massive particles, while the case $q=m=0$ describes massless neutral particles such as photons. The case of hypothetical massless charges $m=0, q\neq 0$ can also be included.

Assume that the metric $g_{\alpha\beta}$, and the electromagnetic potential $A_\alpha$ share the same symmetry with respect to the Killing vector field $k^\alpha$ \cite{Kobialko:2022ozq}, timelike in the essential part of spacetime (e.g. outside the ergosphere), i.e,
\begin{align}
   \Lie_{k} g_{\alpha\beta}=\LCM_{(\alpha} k_{\beta)}=0, \quad \Lie_{k} A_\alpha=k^\lambda \LCM_\lambda A_\alpha +  \LCM_\alpha k^\lambda  A_\lambda=0, \label{eq_killing}
\end{align}
where $\Lie_{k}$ is the Lie derivative along the vector field $k^\alpha$. Spacetimes with this type of symmetry include stationary and static geometries that are not necessarily asymptotically flat. However, for the current consideration, similarly to the study of photon surfaces, static non-rotating spaces are of primary interest. This symmetry will imply conservation of the particle total energy $\E$ defined as 
\begin{equation}\label{energy}
\E\equiv -k_\alpha\left(\dot{\gamma}^\alpha+ q A^\alpha\right).
\end{equation}
Indeed, as a consequence of Eq. (\ref{eq_killing}), the total energy $\E$ will be constant along the worldlines defined by Eq. (\ref{eq_particles}) because
\begin{align}
   d\E/ds
   = \dot{\gamma}^\alpha \LCM_\alpha \E
   & = 
    - \dot{\gamma}^\alpha \dot{\gamma}^\beta \LCM_\alpha k_\beta
    - q k_\beta F^\beta{}_{\lambda}\dot{\gamma}^\lambda
    - q \dot{\gamma}^\alpha  A_\beta \LCM_\alpha k^\beta 
    - q\dot{\gamma}^\alpha k^\beta  \LCM_\alpha A_\beta
    = \\\nonumber&= 
    - qk^\beta F_\beta{}_{\lambda}\dot{\gamma}^\lambda 
    + q \dot{\gamma}^\alpha  k^\beta \LCM_\beta  A_\alpha 
    - q\dot{\gamma}^\alpha k^\beta  \LCM_\alpha A_\beta
    = 0.
\end{align}

It is also useful consider two terms in the expression (\ref{energy}) separately, introducing the kinetic and potential energy $\E =\E_k + \E_p$:
\begin{align}
\E_p\equiv -qk_\alpha A^\alpha,\qquad
\E_k\equiv\E-\E_p.
\end{align}

In the general case, $\E_k$ and $\E_p$ are not conserved separately. The potential energy is a pre-defined function for given $k^\alpha$ and $A^\alpha$. But the kinetic energy $\E_k$ is introduced as a secondary quantity which is a certain function of coordinates for a fixed total energy. Alternatively, it can be represented as a scalar product of the  Killing vector $k^\alpha$ with some properly normalized timelike (for $m\neq 0$, null for $m=0$) vector $v^\alpha$ so that 
\begin{align} \label{eq:constraint} 
    -k_\alpha v^\alpha=\E_k=\E-\E_p, \quad v_\alpha v^\alpha=-m^2.
\end{align}
Then, the set of linearly independent vectors $v^\alpha$ will span all worldlines of the particle with fixed total energy $\E$, mass $m$ and charge $q$ passing through a given point of the spacetime.

Our main goal is to find hypersurfaces where particles with fixed total energy $\E$, mass $m$ and charge $q$ live. In order to describe such hypersurfaces, we will use the following formalism and notations similar to Ref. \cite{Kobialko:2022ozq}. Let $\mathcal{S}$ be a timelike hypersurface of dimension $n-1$ with a spacelike normal unit vector $n^\alpha$ ($n^\alpha n_\alpha=1$).  The induced hypersurface metric reads
\begin{equation}
   h_{\alpha\beta} =
    g_{\alpha\beta}
    -n_\alpha n_\beta,
\end{equation}
defining the projector operator $h^{\alpha}_{\beta}$ and the symmetric second fundamental form $\SFS_{\alpha\beta}$
\begin{equation} \label{eq:form.projector}
    h^{\alpha}_{\beta} =
    \delta^{\alpha}_{\beta}
    - n^\alpha n_\beta,\qquad
    \SFS_{\alpha\beta}
    \equiv
    h^{\lambda}_{\alpha}
    h^{\rho}_{\beta}
    \LCM_{\lambda} n_{\rho}.
\end{equation}
The corresponding tensor projections onto the tangent space of the hypersurface read
\begin{equation} \label{eq:form.projector_2}
\mathcal{T}^{\beta\ldots}_{\gamma\ldots}=h^{\beta}_{\rho}\ldots  h_{\gamma}^{\tau}\ldots T^{\rho\ldots}_{\tau\ldots}, \qquad
    \LCS_\alpha \mathcal{T}^{\beta\ldots}_{\gamma\ldots} =
    h^{\lambda}_{\alpha} h^{\beta}_{\rho}\ldots  h_{\gamma}^{\tau}\ldots \LCM_\lambda \mathcal{T}^{\rho\ldots}_{\tau\ldots}, 
\end{equation}
where $\LCS_\alpha$ is a Levi-Civita connection in $\mathcal{S}$. 

For what follows, it is useful to project the Killing vector onto the hypersurface
\begin{equation}
    k^\alpha = \kappa^\alpha+    k_\perp n^\alpha,  \qquad \kappa^\alpha n_\alpha= 0.
\end{equation}
and to distinguish cases $\kappa^2\neq0$ and $\kappa^2=0$ ($\kappa^2 \equiv \kappa_\alpha \kappa^\alpha$). For the latter case the surface $\mathcal{S}$ will represent the Killing horizon if the Killing vector is tangent to it (i.e., $k_\perp = 0$). Considering of this case will be postponed to App. \ref{sec:kh}. In the first case we can decompose vector $v^\alpha$, subject to the constraints (\ref{eq:constraint}):
\begin{align}
&v^\alpha=(-\E_k/\kappa^2) \kappa^\alpha + u^\alpha, \quad  \kappa_\alpha  u^\alpha= n_\alpha u^\alpha = 0, \quad u^2=-m^2-\E_k^2/\kappa^2,
\end{align}
where $u^\alpha$ is some vector tangent to $\mathcal{S}$ and orthogonal to $\kappa^\alpha$. In most of the spacetime $\kappa^2<0$ and absolute value $|\kappa^2|$ must satisfy an additional inequality
\begin{equation}\label{eq:restriction}
    0<|\kappa^2|\leq\E_k^2/m^2.
\end{equation}
Indeed, in a Lorentzian manifold the orthogonal complement of a timelike vector $\kappa^\alpha$ can contain only spacelike vectors, i.e., $u^2>0$ (or $u^\alpha=0$). In particular, we find $\E_k\neq0$. This general limitation on the kinetic energy is also preserved in the case $\kappa^2=0$ (see App. \ref{sec:kh}). Restriction (\ref{eq:restriction}) has a simple physical meaning in terms of the particle motion. The strict equality corresponds to the classical turning points of worldlines in $\mathcal{S}$ while the inequality specifies the areas in hypersurface allowed for motion \cite{Kobialko:2020vqf,Kobialko:2021uwy}. In the case of massless particles, as expected, there are no constraints on nonzero $\E_k$, since the right hand side of Eq. (\ref{eq:restriction}) tends to infinity for any finite $\E_k$. This corresponds to conformal invariance of null geodesic equations.

\section{Massive particle surfaces}
\label{sec:surfaces}

As it is mentioned in Ref. \cite{Claudel:2000yi}, the photon sphere $\mathcal{S}_{\rm ph}$ in Schwarzschild spacetime has two main properties: (\textit{i}) any null geodesic initially tangent to $\mathcal{S}_{\rm ph}$ will remain tangent to it; (\textit{ii}) $\mathcal{S}_{\rm ph}$ does not evolve with time. The general definition of a photon surface is based on only the first of these properties and leads to the fact that such surfaces are totally umbilic \cite{Chen}. Now we would like to give the generalization of the photon surfaces for massive charged particles of fixed total energy: the {\em massive particle surfaces}.   

\begin{definition} 
A massive particle surface in $M$ is an immersed, timelike, nowhere orthogonal to Killing vector $k^\alpha$ hypersurface $\SE$ of $M$ such that, for every point $p\in \SE$ and every vector $v^\alpha|_p \in T_p\SE$ such that $v^\alpha  \kappa_\alpha|_p=-\E_k|_p$ and $v^\alpha v_\alpha|_p=-m^2$, there exists a worldline $\gamma$ of $M$ for a particle with mass $m$, electric charge $q$ and total energy $\E$ such that $\dot{\gamma}^\alpha(0) =v^\alpha|_p$ and $\gamma\subset \SE$\footnote{The worldline is considered to lie on the surface $\SE$ in some open neighborhood of any interior point, but in general it can leave the surface through the boundary if the latter exists.}.
\end{definition} 

In other words, any worldline of a particle with mass $m$, electric charge $q$ and total energy $\E$ initially tangent to massive particle surface $\SE$ will remain tangent to $\SE$. This definition reduces to the definition of photon surfaces \cite{Claudel:2000yi} if we restrict it to the massless $m=0$, chargeless $q=0$ case and require arbitrariness of the total energy. However, as we will see later the arbitrariness of the worldline energy (or, more precisely, photon frequency) will arise automatically in this particular case. The independence of null geodesics on the photon frequency is a consequence of conformal invariance of the Maxwell theory. The theories of massive particles do not possess conformal invariance, so it is crucial that we define the characteristic surfaces for massive particles only for fixed energy. 

The non-orthogonality condition for Killing vector makes it possible to have the non-zero kinetic energy $\E_k$ for a worldline tangent to the surface $\SE$ and is in fact a natural condition for timelike surfaces. In the case when the Killing vector is tangent to the $\SE$, massive particle surfaces automatically inherits the corresponding symmetry of the original spacetime. In particular, for timelike Killing vectors, such surface do not evolve with time, i.e., also satisfy the second condition for photon spheres in the Schwarzschild spacetime. The key geometric properties of the massive particle surfaces are given by the following theorem. 

\begin{theorem}\label{th:theorem}
Let $\SE$ is an immersed, timelike hypersurface in $M$ and $k^\alpha$ is a Killing vector nowhere orthogonal to $\SE$. If $\kappa^2>-\E^2_k/m^2$ and $\E_k\neq0$ everywhere on $\SE$, the following statements are equivalent:

(i) $\SE$ is a massive particle surface for given $q, m$ and $\E$;

(ii) the second fundamental form satisfies 
\begin{equation}
    \SFS_{\alpha\beta} v^\alpha v^\beta=-q n^\beta F_{\beta\lambda} v^\lambda,
\end{equation}
for all $p\in\SE$ and $\forall v^\alpha \in T\SE$ so that $v^\alpha v_\alpha=-m^2$ and $v^\alpha \kappa_\alpha=-\E_k$;

(iii) the second fundamental form satisfies
\begin{equation}\label{eq:condition_3}
\SFS_{\alpha\beta}=\cfrac{\SFS_\tau}{n-2} H_{\alpha\beta}+(q/\E_k)\F_{\alpha\beta},
\end{equation}
where  $H_{\alpha\beta}$ is related to the induced metric $h_{\alpha\beta}$ as
\begin{align}\label{eq:definitions_3} 
    &
    H_{\alpha\beta}=h_{\alpha\beta}+ (m^2/\E_k^2)  \kappa_\alpha \kappa_\beta,\qquad
    \F_{\alpha\beta}=\frac{1}{2}n^\rho F_{\rho (\alpha} \kappa_{\beta)}, 
    \\\nonumber &  H=H^{\alpha}{}_{\alpha}, \qquad
    \SFS=\SFS^{\alpha}{}_{\alpha}, \qquad
    \F = {\F^\alpha}_\alpha = n^\rho F_{\rho\lambda} \kappa^\lambda,
    \\\nonumber&
    \SFS_\tau = \frac{n-2}{H}(\SFS - q\F/\E_k);
\end{align}

(iv) every worldline in $\SE$ with $\dot{\gamma}^\alpha \kappa_\alpha|_p=-\E_k|_p$ and $\dot{\gamma}^\alpha  \dot{\gamma}_\alpha|_p=-m^2$ at some point $p\in \SE$ is a worldline in $M$. 

\end{theorem}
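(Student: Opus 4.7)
The plan is to establish the cycle (i) $\Rightarrow$ (ii) $\Leftrightarrow$ (iii) $\Rightarrow$ (iv) $\Rightarrow$ (i), viewing (ii) and (iii) as two algebraic avatars of the normal component of the equations of motion on $\SE$, and (i) and (iv) as its kinematic corollaries. For (i) $\Rightarrow$ (ii) I would use the standard Claudel--Virbhadra--Ellis manipulation: if $\gamma\subset\SE$ is a worldline then $n_\alpha\dot\gamma^\alpha\equiv 0$ along $\gamma$, so differentiating this identity and substituting (\ref{eq_particles}) produces $\SFS_{\alpha\beta}\dot\gamma^\alpha\dot\gamma^\beta = -q n^\rho F_{\rho\lambda}\dot\gamma^\lambda$; by the definition of $\SE$ this holds for every admissible $v^\alpha|_p$, so (ii) follows. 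The closing link (iv) $\Rightarrow$ (i) is equally direct: for each admissible pair $(p,v)$ the intrinsic (tangential) equation of motion on $\SE$ admits a short-time solution $\gamma\subset\SE$ with $\dot\gamma(0)=v$, and (iv) upgrades $\gamma$ to a worldline of $M$, which is exactly the definition of a massive particle surface.

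For (ii) $\Leftrightarrow$ (iii), the implication (iii) $\Rightarrow$ (ii) is obtained by contracting (\ref{eq:condition_3}) with $v^\alpha v^\beta$ and using the two identities $H_{\alpha\beta}v^\alpha v^\beta = v^2 + (m^2/\E_k^2)(v\cdot\kappa)^2 = -m^2 + m^2 = 0$ and $\F_{\alpha\beta}v^\alpha v^\beta = (v\cdot\kappa)\,n^\rho F_{\rho\lambda}v^\lambda = -\E_k\,n^\rho F_{\rho\lambda}v^\lambda$, both automatic consequences of the constraints. For the converse, I would decompose $v^\alpha = -(\E_k/\kappa^2)\kappa^\alpha + u^\alpha$ with $u\in W:=\kappa^\perp\cap T\SE$ and $u^2 = -m^2-\E_k^2/\kappa^2>0$ (the strict positivity is precisely the hypothesis $\kappa^2 > -\E_k^2/m^2$), substitute into (ii), and polarize the resulting polynomial of degree $\leq 2$ in $u$ on the $(n-3)$-sphere $\{u\in W:u^2 = \text{const}\}$. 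This extracts three independent identities: $\SFS$ restricted to $W\otimes W$ is proportional to the induced metric, the mixed piece $\SFS(\kappa,\cdot)|_W$ is fixed by $n^\rho F_{\rho\cdot}|_W$, and $\SFS(\kappa,\kappa)$ is fixed by $n^\rho F_{\rho\lambda}\kappa^\lambda$. Reassembling these in the splitting $T\SE = W\oplus\mathrm{span}(\kappa)$ using (\ref{eq:definitions_3}), with the trace pinning the proportionality constant to $\SFS_\tau/(n-2)$, reproduces (\ref{eq:condition_3}).

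The remaining link (iii) $\Rightarrow$ (iv) rests on Gauss--Weingarten: the full equation (\ref{eq_particles}) along a curve tangent to $\SE$ splits into a tangential equation $\dot\gamma^\alpha\LCS_\alpha\dot\gamma^\beta = q h^\beta{}_\mu F^\mu{}_\lambda\dot\gamma^\lambda$ (the ``worldline in $\SE$'' of (iv)) and a normal equation $\SFS_{\alpha\beta}\dot\gamma^\alpha\dot\gamma^\beta = -q n^\rho F_{\rho\lambda}\dot\gamma^\lambda$, which is exactly (ii). Under (iii), the normal equation is satisfied whenever the constraints $\dot\gamma^2 = -m^2$ and $\dot\gamma\cdot\kappa = -\E_k$ are met. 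The first is preserved along the intrinsic flow by antisymmetry of $F$; for the second I would check by a short calculation using the Killing invariance (\ref{eq_killing}) of $k$ and $A$ together with (ii) itself that the total energy $\E$ (and therefore $\E_k = \E-\E_p$ at each point) is conserved along the intrinsic flow, so the constraint $\dot\gamma\cdot\kappa = -\E_k$ propagates from the initial point.

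I expect the main obstacle to be the polarization step in (ii) $\Rightarrow$ (iii): one must show that a degree-$\leq 2$ polynomial in $u$ which vanishes on an $(n-3)$-sphere inside $W$ has its quadratic part proportional to $h|_W$ and its linear part identically zero. The hypothesis $\kappa^2 > -\E_k^2/m^2$ is exactly what makes this sphere non-degenerate, and $n\geq 4$ keeps its dimension at least one, which is what makes the polarization viable. A subsidiary subtlety in (iii) $\Rightarrow$ (iv) is the preservation of $\dot\gamma\cdot\kappa$ along the intrinsic EOM: unlike $\dot\gamma^2$ it does not follow from antisymmetry alone, but must be derived from (ii) together with the Killing invariance of $A$.
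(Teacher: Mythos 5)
Your proposal follows essentially the same route as the paper: the same cycle of implications, the same normal/tangential (Gauss) splitting of the equation of motion for $(i)\Rightarrow(ii)$ and $(iii)\Rightarrow(iv)$, and the same decomposition $v^\alpha=-(\E_k/\kappa^2)\kappa^\alpha+u^\alpha$ with a parity-plus-polarization argument on the quadric $u^2=-m^2-\E_k^2/\kappa^2$ to extract $(iii)$ from $(ii)$, where the paper implements the polarization by an explicit choice of basis vectors $u^\alpha=a\,e_0{}^\alpha+b(e_i{}^\alpha\cos\psi+e_j{}^\alpha\sin\psi)$. The one substantive addition is your observation that in $(iii)\Rightarrow(iv)$ the constraint $\dot{\gamma}^\alpha\kappa_\alpha=-\E_k$ must be shown to propagate along the intrinsic flow (it does, via the Killing invariance of $A_\alpha$ combined with the normal equation itself, in an ODE-uniqueness argument); the paper's proof of this step imposes the constraints pointwise and does not address their propagation explicitly, so your extra care here is warranted rather than a deviation.
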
 

\begin{proof}
We will prove consequently that $(i) \Rightarrow (ii)$, $(ii) \Rightarrow (iii)$, $(iii) \Rightarrow (iv)$ and $(iv) \Rightarrow (i)$, so any statement implies the other one. In the proof of  $(ii)\Rightarrow(iii)$ we will use the decomposition suitable for any surface except the case $\kappa^2 = 0$. For the sake of clarity, the proof for this case is given in App. \ref{sec:kh}. 

$(i)\Rightarrow(ii)$. Suppose $\SE$ is a massive particle surface. Let $p\in \SE$ and let $v^\alpha|_p \in T_p\SE$ such that $v^\alpha  \kappa_\alpha|_p=-\E_k|_p$. Then there exists a worldline $\gamma$ ($\dot{\gamma}^\alpha\dot{\gamma}_\alpha=-m^2$) of $M$ such that $\dot{\gamma}^\alpha(0) =v^\alpha|_p$, $\gamma\subset S$. Consider the Gauss decomposition \cite{Chen} of the covariant derivative in the particle equation of motion 
\begin{align}
    q{F^\beta}_\lambda \dot{\gamma}^\lambda = \dot{\gamma}^\alpha \LCM_\alpha \dot{\gamma}^\beta =
      \dot{\gamma}^\alpha \LCS_\alpha \dot{\gamma}^\beta
    - \SFS_{\alpha\sigma} \dot{\gamma}^\alpha \dot{\gamma}^\sigma n^\beta.
\end{align}
Projecting this equation onto the normal and tangent subspaces, one obtains the following two conditions:
\begin{subequations}
\begin{align}
    & \label{eq:tangent_projection}
    q h^{\beta\gamma} F_{\gamma\lambda} \dot{\gamma}^\lambda = \dot{\gamma}^\alpha \LCS_\alpha \dot{\gamma}^\beta,
    \\ & \label{eq:normal_projection}
    q n^\beta F_{\beta\lambda} \dot{\gamma}^\lambda = - \SFS_{\alpha\beta} \dot{\gamma}^\alpha \dot{\gamma}^\beta.
\end{align}
\end{subequations}
The Eq. (\ref{eq:tangent_projection}) is an equation of motion of the charged particle in the hypersurface $\SE$, while the constraint (\ref{eq:normal_projection}) can be treated as a condition for the hypersurface itself. So, the following condition must hold for any $p \in \SE$
\begin{align} \label{eq:chi_condition}
    \SFS_{\alpha\beta} v^\alpha v^\beta=-q n^\beta F_{\beta\lambda} v^\lambda.
\end{align}

$(ii)\Rightarrow(iii)$. We can always decompose the tensors $v^\alpha$ and $\SFS_{\alpha\beta}$ as follows (for $\kappa^2 \neq 0$)
\begin{subequations}
\begin{align}\label{eq:v_decomposition}
&v^\alpha=(-\E_k/\kappa^2) \kappa^\alpha + u^\alpha, \quad  \kappa_\alpha  u^\alpha=0, \quad u^2=-m^2-\E_k^2/\kappa^2,
\end{align}
\begin{align}
&\SFS_{\alpha\beta} = \alpha \kappa_\alpha  \kappa_\beta + \kappa_{(\alpha}  \beta_{\beta)}+\lambda_{\alpha\beta}+(q/\E_k)\F_{\alpha\beta}, \quad \kappa^\alpha\lambda_{\alpha}{}_{\beta} =0, \quad  \kappa^\alpha\beta_\alpha =0,
\end{align}
\end{subequations}
where the symmetrical form $\F_{\alpha\beta}\equiv \frac{1}{2} n^\rho  F_{\rho (\alpha} \kappa_{\beta)}$ in $\chi_{\alpha\beta}$ was introduced to compensate the right hand side in Eq. (\ref{eq:chi_condition}), giving the following general condition
\begin{align} \label{eq:intermediate}
\alpha \E_k^2 - 2 \E_k  \beta_\alpha u^\alpha +\lambda_{\alpha}{}_{\beta}u^\alpha u^\beta=0.
\end{align}
Since the condition (\ref{eq:intermediate}) holds for any $u^\alpha$, and mapping $u^\alpha \to - u^\alpha$ preserves the norm of the vector $u^\alpha$, this condition must hold for any sign of $u^\alpha$
\begin{align}
\alpha \E_k^2\pm2 \E_k  \beta_\alpha u^\alpha +\lambda_{\alpha}{}_{\beta}u^\alpha u^\beta=0.
\end{align}
Adding and subtracting these equations, we find
\begin{align} \label{eq:lambda_beta}
\lambda_{\alpha}{}_{\beta}u^\alpha u^\beta=- \alpha \E_k^2,  \quad  \beta_\alpha u^\alpha=0.
\end{align}
From the arbitrariness of $u^\alpha$ we get $\beta_\alpha=0$.

Introducing an orthonormal basis $\{e_a{}^\alpha\}$ ($e_a{}^\alpha e_b{}_\alpha=\eta_{ab}$, with indices $a,b=0,1,\ldots,n-3$), the equation $\lambda_{ab}u^a u^b=- \alpha \E_k^2$ must hold for all vectors satisfying $\eta_{ab} u^a u^b = -m^2 - \E_k^2/\kappa^2$, where the matrix $\eta_{ab} = \text{diag}(\pm 1, 1, \ldots, 1)$ is a flat metric with the first element reflecting the signature of the tangent subspace (with the symmetry $G=SO(n-2)$ or $SO(n-3,1)$). One can expect, that $\lambda_{ab}$ has the same symmetry, since it should not depend on the choice of the basis. This is possible if $\lambda_{ab}$ is a unity element of $G$ up to an arbitrary multiplier $\lambda_{ab} = \lambda \eta_{ab}$ for some $\lambda$. Let us prove this more strictly.
From the full set of possible vectors $u^a$, choose a subset parameterized by $a, b$ and $\psi$
\begin{align}
&u^\alpha=a e_0{}^\alpha+ b (e_i{}^\alpha \cos \psi  + e_j{}^\alpha \sin\psi),
\end{align}
with the constraint $b^2=-\eta_{00} a^2-m^2-\E_k^2/\kappa^2$ and indices $i,j=1,\ldots,n-3$. Then the left hand side of Eq. (\ref{eq:lambda_beta}) is
\begin{align}
    \lambda_{ab}u^a u^b
    & =
      a^2\SFS_{00} 
    + \frac{b^2}{2}\left(\SFS_{ii}+ \SFS_{jj}\right)
    \\\nonumber&
    + 2 a b (\SFS_{0i}\cos \psi+\SFS_{0j}\sin\psi)
    + b^2 \left(\SFS_{ij} \sin (2\psi) + \frac{(\SFS_{ii}- \SFS_{jj})}{2}\cos (2 \psi)\right). \nonumber
\end{align}
Since angle $\psi$ and $a\neq0$ are arbitrary as long as $a$ does not violate the condition $b^2>0$ (such $a$ always exist due to the condition $\kappa^2 > -\E_k^2/m^2$) we get
\begin{align}
\SFS_{00}=\eta_{00}\SFS_{ii}, \quad  (m^2+\E_k^2/\kappa^2)\SFS_{ii}=\alpha \E_k^2, \quad \SFS_{ii}= \SFS_{jj}, \quad \SFS_{ij}=0, \quad \SFS_{0i}=0.
\end{align}
Note that for $n=4$ there are only two vectors in the tangent subspace, so one can choose $\psi=0$ or $\pi$ to stay with only two basis vectors. Taking back the holonomic basis, the final form of the tensor $\lambda_{\alpha\beta}$ is
\begin{align}
\lambda_{\alpha\beta}=\lambda (h_{\alpha\beta}-\kappa_\alpha  \kappa_\beta/\kappa^2),\qquad
\lambda \equiv \alpha \E_k^2/(m^2+\E_k^2 /\kappa^2).
\end{align}
Collecting all together, the second fundamental form can be presented as
\begin{align}
&\SFS_{\alpha\beta} 
= \alpha \kappa_\alpha  \kappa_\beta + \lambda (h_{\alpha\beta}-\kappa_\alpha  \kappa_\beta/\kappa^2)
+(q/\E_k)\F_{\alpha\beta}
=\frac{\SFS_\tau}{n-2} H_{\alpha\beta}
+(q/\E_k)\F_{\alpha\beta},
\end{align}
where the function $\lambda$ is changed to $\SFS_\tau \equiv (n-2)\lambda$.

$(iii)\Rightarrow(iv)$. If the condition $(iii)$ holds, then Eq. (\ref{eq:normal_projection}) holds as well
\begin{align}
   &
   \SFS_{\alpha\sigma} \dot{\gamma}^\alpha \dot{\gamma}^\sigma
   = \frac{\SFS_\tau}{n-2} (-m^2+(m/\E_k)^2 \E_k^2)
   -q n^\beta F_{\beta\lambda} \dot{\gamma}^\lambda
   =-q n^\beta F_{\beta\lambda} \dot{\gamma}^\lambda.
\end{align}
On the other hand, Eq. (\ref{eq:tangent_projection}) is an equation of motion for the charged particle in $\SE$, so $\gamma$ is a trajectory of the charged particle both in $M$ and $\SE$.

$(iv)\Rightarrow(i)$ Let $p \in \SE$ and let $v^\alpha|_p \in T_p\SE$ such that $v^\alpha v_\alpha|_p =-m^2$ and $v^\alpha  \kappa_\alpha|_p =-\E_k|_p $. Let $\gamma$ be a worldline of $\SE$ such that $\dot{\gamma}^\alpha(0) = v^\alpha|_p$. Then, by ($iv$), $\gamma$ is a worldline
of $M$ such that $v^\alpha v_\alpha|_p =-m^2$ and $v^\alpha  \kappa_\alpha|_p =-\E_k|_p $ and $\gamma \subset \SE$.
\end{proof}

$\Box$

The obtained result is a complete analogue of theorem 2.2 obtained in Ref. \cite{Claudel:2000yi} for photon surfaces. In particular, statement ($iii$) of the theorem describes the pure geometry of a massive particle surface without references to the worldline equations and it represents an analog of the totally umbilic condition for photon surfaces \cite{Chen,Claudel:2000yi}. This equivalent definition is an effective way to analyze surface geometry in non-integrable dynamical systems \cite{Frolov:2017kze,Kobialko:2020vqf,Kobialko:2021uwy} and to study general theoretical problems such as Penrose inequalities \cite{Shiromizu:2017ego,Feng:2019zzn,Yang:2019zcn}, uniqueness theorems \cite{Cederbaum,Yazadjiev:2015hda,Yazadjiev:2015mta,Yazadjiev:2015jza,Yoshino:2016kgi,Yazadjiev:2021nfr,Koga:2020gqd,Rogatko,Cederbaumo,Cederbaum:2019rbv} and hidden symmetries \cite{Gibbons:2011hg,Cariglia:2014ysa,Frolov:2017kze,Kobialko:2021aqg,Kobialko:2022ozq}.

The theorem works for the timelike and spacelike Killing vectors and their projections $\kappa^\alpha$. In the latter case, the quantity $\E$ may not always be associated with the energy of the particle, but is related to some other conserved quantity. This freedom allows us to analyze both surfaces that fall inside ergoregions where energy-related Killing vector is spacelike and surfaces in general geometries such as Kaluza-Klein's models \cite{Rasheed:1995zv}. As an example, if one chooses $k^\alpha\partial_\alpha=\partial_\phi$ (here, $\phi$ is an azimuth angle), then the massive particle surface would correspond to particles with fixed mass $m$, charge $q$ and angular momentum projection $L_z$. 

Let us analyze some properties of the massive particle surfaces and caught worldlines following from theorem. 

{\bf Killing basis.} For the clearer interpretation of the geometric definition ($iii$), we introduce a basis associated with spacetime symmetries. Namely, a set of the basis vectors is composed by the vector $\kappa^\alpha$ and a set of linearly independent $n-2$ vectors $\tau^\alpha_{(i)}$ in $\SE$ orthogonal to $\kappa^\alpha$ (i.e., $\tau^\alpha_{(i)} \kappa_\alpha = 0$). The projections of the second fundamental form $\chi_{\alpha\beta}$ onto vectors $\kappa^\alpha$ and $\tau_{(i)}^\alpha$ read 
\begin{subequations}\label{eq:chi_all}
\begin{align}
    & \label{eq:chi_kk}
    \kappa^\alpha\kappa^\beta\SFS_{\alpha\beta} =
    \kappa^2 \left(\SFS - \SFS_\tau\right),
    \\& \label{eq:chi_kt}
    \kappa^\alpha\tau_{(i)}^\beta\SFS_{\alpha\beta}=
    \frac{1}{2}(q/\E_k)
         \kappa^2 n^\rho F_{\rho \lambda}\tau^\lambda_{(i)},
    \\& \label{eq:chi_tt}
    \tau_{(i)}^\alpha \tau_{(j)}^\beta \SFS_{\alpha\beta} = 
    \frac{\SFS_\tau}{n-2}\tau^\alpha_{(i)}\tau^\beta_{(j)}h_{\alpha\beta},
\end{align}
\end{subequations}
and the traceless part of the second fundamental form is
\begin{align}
 &\sigma_{\alpha\beta} \equiv\SFS_{\alpha\beta}-\SFS/(n-1) h_{\alpha\beta} = 
 \frac{m}{\E_k}\kappa_{\alpha\beta}^\lambda
 \left(
 \frac{\SFS_\tau m}{(n-2)\E_k} \kappa_\lambda + \frac{q}{m} n^\rho F_{\rho\lambda}
 \right),
 \\\nonumber&
 \kappa_{\alpha\beta}^\lambda\equiv\kappa_{\alpha} h^\lambda_{\beta} + \kappa_{\beta} h^\lambda_{\alpha} - \kappa^\lambda h_{\alpha\beta} / (n-1).
\end{align}
Using the definition of Ref. \cite{Kobialko:2020vqf,Kobialko:2021uwy} for the {\em partially umbilic surfaces}, we can claim that the second fundamental form of the massive particle surface is partially umbilic with respect to directions $\tau^\alpha_{(i)}$ orthogonal to tangential projection of the Killing vector $\kappa^\alpha$. The quantity $\SFS_\tau/(n-2)$ has a meaning of the mean curvature of the orthogonal complement to $\kappa^\alpha$. Partially umbilicity of the surfaces is exactly the same property that defines fundamental photon surfaces \cite{Kobialko:2020vqf,Kobialko:2021aqg}. However, now the mixed components are not arbitrary but influenced by the Maxwell form. This geometric similarity of massive and massless surfaces is quite expected, since the construction of fundamental photon surfaces is based on the condition of capturing null geodesics with a fixed impact parameter, the role of which in the current considerations is played by energy. In particular, one can expect that massive particle surfaces form some foliations of spacetime (or its parts) parameterized by energy values, similarly to fundamental photon surfaces forming photon regions \cite{Grenzebach,Grenzebach:2015oea} with a continuous change of the impact parameter \cite{Kobialko:2020vqf}.

{\bf Principal curvatures.} In the geometric definition of massive particle surface ($iii$) we introduced a new symmetric quadratic forms $H_{\alpha\beta}$ and $\F_{\alpha\beta}$ associated with gravitational and electromagnetic fields respectively. The quadratic form $H_{\alpha\beta}$ is the induced metric with an addition of one distinguished component along the projections of the Killing vector field. For neutral particles, this additional component leads to the fact that the direction of the Killing vector $\kappa^\alpha$ defines a unique principal direction along which the principal curvature differs from all others. Indeed, from (\ref{eq:chi_kk}) and (\ref{eq:chi_tt}) the principal curvatures along directions $\tau^\alpha_{(i)}$ and $\kappa^\alpha$ for neutral particles reads
\begin{equation}
\lambda_{\tau_{(i)}}=\cfrac{\SFS_\tau}{n-2}, \qquad \lambda_\kappa= \lambda_{\tau_{(i)}} \left(1 + (m^2/\E^2)  \kappa^2\right).
\end{equation}
Thus, the surface is generally not totally umbilic (i.e., it has not equal principal curvatures) unlike the photon surface. Furthermore, under conformal transformations of the metric, the ratio of principal curvatures $\lambda_\kappa / \lambda_{\tau_{(i)}}$ changes. The requirement of the conformal invariance of this ratio leads to the condition $m=0$, i.e., to the coincidence of all principal curvatures and their independence from the energy scale, which was expected for the photon surfaces. The quadratic form $\F_{\alpha\beta}$ determines the electromagnetic force acting on the charged particles lying on the surface in the direction normal to the surface. It consists of the normal electric field $\F$ and tangential magnetic field $n^\rho F_{\rho \lambda}\tau^\lambda_{(i)}$ (defined with respect to the observer moving along $\kappa^\alpha$). In the case of a non-zero tangential magnetic field $n^\rho F_{\rho \lambda}\tau^\lambda_{(i)}$, the surface may not have a well defined principal curvatures as far as the induced metric and the second fundamental form may not be simultaneously diagonalizable \cite{Chen}. In order to diagonalize the metric and the second fundamental form simultaneously, one should find an orthogonal basis such that any two different basis vectors contracted with $\chi_{\alpha\beta}$ gives zero. For the sake of clarity we will assume that $\kappa^2<0$. First of all, we extract $n-3$ orthogonal unit vectors from the set $\tau^\alpha_{(i)}$, which are also orthogonal to $\mathfrak{F}^\beta$, where $\mathfrak{F}^\beta \equiv n^\rho F_{\rho\lambda}(h^{\lambda\beta} - \kappa^\lambda \kappa^\beta/\kappa^2)$. The remaining subspace is a linear span of the vectors $\kappa^\alpha/\sqrt{|\kappa^2|}$ and $\mathfrak{F}^\beta / \sqrt{|\mathfrak{F}^2|}$. Using Eq. (\ref{eq:chi_all}), the second fundamental form in this basis has the form
\begin{equation}
        \chi_{ab} = \begin{pmatrix}
        \tilde{\chi}_{ab} & \mathbb{0}_{n-3}
        \\
        \mathbb{0}_{n-3} & \frac{\chi_\tau}{n-2}\mathbb{1}_{n-3}
    \end{pmatrix},\qquad
    \tilde{\chi}_{ab} = \begin{pmatrix}
        \chi_\tau - \chi & -\frac{q \sqrt{|\kappa^2 \mathfrak{F}^2|}}{2\E_k}
        \\
        -\frac{q \sqrt{|\kappa^2 \mathfrak{F}^2|}}{2\E_k} & \frac{\chi_\tau}{n-2}
    \end{pmatrix},
\end{equation}
where the matrix $\tilde{\chi}_{ab}$ is a projection onto the subspace $\left(\kappa^\alpha/\sqrt{|\kappa^2|},\,\mathfrak{F}^\beta / \sqrt{|\mathfrak{F}^2|}\right)$.
Eigenvectors of the matrix ${\tilde{\chi}_a}^b$ are always orthogonal and they give the basis, where the matrix $\tilde{\chi}_{ab}$ is diagonal. They exist as long as the eigenvalues $\lambda_\pm$ of the matrix ${\tilde{\chi}_a}^b$ are not complex
\begin{equation}
    \lambda_\pm = \frac{1}{2}\left(
        \frac{2 + (m/\E_k)^2\kappa^2}{n-2}\SFS_\tau - q\F/\E_k
        \pm \sqrt{\lambda}
    \right),\qquad
    \lambda = \left(\frac{(m/\E_k)^2\kappa^2}{n-2}\SFS_\tau - q\F/\E_k \right)^2 - \frac{q^2 |\kappa^2 \mathfrak{F}^2|}{\E_k^2}.
\end{equation}
Thus, $\lambda$ should be larger or equal to zero. In the limit $q^2\mathfrak{F}^2=0$ we already have a diagonalized matrix with eigenvalues $\chi-\chi_\tau$ and $\chi_\tau/(n-2)$. If $q^2\mathfrak{F}^2=0$, there are only one direction with a distinguished principal curvature, otherwise there are two such directions.

{\bf Master equation.} In order to prove that a given surface is a massive particle surface, one should show that the surface is umbilic with respect to the directions $\tau^\alpha_{(i)}$ normal to $\kappa^\alpha$, i.e., show that Eq. (\ref{eq:chi_tt}) holds. However, in a number of certain calculations, it is convenient to rewrite remaining Eqs. (\ref{eq:chi_kk}) and (\ref{eq:chi_kt}) through some other known functions \cite{Kobialko:2021aqg,Kobialko:2022ozq}. First, Eq. (\ref{eq:chi_kk}) can be rewritten as
\begin{align}
    - \kappa^{-2} \kappa^\alpha n^\beta \nabla_\alpha \kappa_\beta = \SFS - \SFS_\tau.
\end{align}
After an explicit substitution of the relationship between $\SFS$ and $\SFS_\tau$ from Eq. (\ref{eq:definitions_3}), we find the master equation for the massive particle surfaces
\begin{align} \label{eq:master}
    - \kappa^{-2} \kappa^\alpha n^\beta \nabla_\alpha \kappa_\beta =
    \frac{1 + (m/\E_k)^2\kappa^2}{n-2}\SFS_\tau +q\F/\E_k.
\end{align}
Resolving Eq. (\ref{eq:master}) with respect to $\E$, one can find the total energy of the massive particle surface $\SE$ explicitly
\begin{equation}\label{eq:energy}
    \E_\pm =
    \pm m \sqrt{
              \frac{ \kappa^2 \SFS_\tau}{K}
            + \frac{\mathcal{F}^2 (n-2)^2 q^2}{4m^2K^2}
        }
    + \frac{\mathcal{F} (n-2) q}{2 K}
    - qk_\alpha A^\alpha,
\end{equation}
where
\begin{equation}\label{eq:K_def}
    K = - \SFS_\tau - (n-2) \kappa^{-2} \kappa^\alpha n^\beta \nabla_\alpha \kappa_\beta.
\end{equation}
The only physical branch with future-directed particles is $\E_+$. The right-hand side of Eq. (\ref{eq:energy}) must be constant on the surface, otherwise the surface under consideration is not a massive particle surface. Expression (\ref{eq:energy}) manifests the charge--time-reversal symmetry: $\E_\pm\to-\E_\mp,\,q\to-q$. The last condition for the massive particle surface to exist follows form Eq. (\ref{eq:chi_kt}) in the form 
\begin{align}\label{eq:offdiagonal}
    n^\alpha \tau_{(i)}^\beta \left(
        \kappa^{-2}  \LCM_\beta\kappa_\alpha + \frac{1}{2}(q/\E_k) F_{\alpha \beta}
    \right) = 0.
\end{align}

If the Killing vector is tangent to the surface, $k^\alpha = \kappa^\alpha$, the quantity $\nabla_\alpha \kappa_\beta$ is antisymmetric, and Eq. (\ref{eq:K_def}) can be simplified as follows
\begin{equation}
    K = - \SFS_\tau + \frac{n-2}{2} n^\alpha \nabla_\alpha \ln \kappa^2.
\end{equation}

{\bf Special cases.} In the case of neutral particles, the second fundamental form is block-diagonal and surface always has well-defined principal curvatures. However, the charged particle surfaces may also inherit this property, if the additional constraints on the Maxwell form is imposed, namely, $n^\alpha F_{\alpha \beta}\tau^\beta_{(i)}=0$. As we will see in examples from Sec. \ref{sec:examples}, this case is very common among physically interesting solutions. Since $\tau^\lambda_{(i)}$ is any vector from the corresponding subspace, the quantity $n^\alpha F_{\alpha \beta}$ should be parallel to $\kappa_\beta$:
\begin{align} \label{eq:f_parallel}
    n^\alpha F_{\alpha \beta}=f \kappa{}_{\beta}
    \quad \Rightarrow \quad
    \F_{\alpha\beta}=f \kappa_\alpha \kappa_\beta,
    \quad
    \F=f \kappa^2. 
\end{align} 
The condition from Eq. (\ref{eq:offdiagonal}) reduces to the simpler one:
\begin{align}
    n^\alpha \tau_{(i)}^\beta \LCM_\beta\kappa_\alpha = 0.
\end{align}
In this case, the electromagnetic Lorenz force manifests only through the definition of the kinetic energy $\E_k = \E + q k^\alpha A_\alpha$ and the relation between $\chi$ and $\chi_\tau$ in Eq. (\ref{eq:definitions_3}). If condition (\ref{eq:f_parallel}) holds, the traceless part $\sigma_{\alpha\beta}$ is identically zero if
\begin{equation}
    \frac{q\E_k}{m^2} = -\cfrac{\SFS}{(n-1) f}.
\end{equation}
Such surfaces are totally umbilic, which are known to be photon surfaces. The coincidence of a massive particle surface with a photon surface for some parameters has a simple physical explanation. Imagine a photon surface and throw massive neutral particles from this surface (with the speed smaller than the speed of light). As they are neutral, the electromagnetic field does not manifest in their dynamic and they will fall inside. However, for charged particles, one can try to find such a charge or energy that the electromagnetic field repels them as strongly as gravity attracts.

{\bf Gauge transformations.} Under a gauge transformation of the potential  $A_\alpha\rightarrow A_\alpha+\LCM_\alpha \varphi$, the potential energy is transformed in the following way
\begin{align}
\E_p\rightarrow\E_p+\delta \E_p, \quad \delta \E_p=-qk^\alpha \LCM_\alpha \varphi.
\end{align}
Imposing the requirement of the symmetry of the electromagnetic field 
\begin{align}
0=k^\lambda \LCM_\lambda \LCM_\alpha \varphi+  \LCM_\alpha k^\lambda  \LCM_\lambda \varphi=\LCM_\alpha k^\lambda \LCM_\lambda \varphi+k^\lambda \LCM_\lambda \LCM_\alpha \varphi=\LCM_\alpha (k^\lambda \LCM_\lambda \varphi), 
\end{align}
total and potential energies of the worldline can only shift by a constant under symmetry-preserving gauge transformations. Accordingly, the complete massive particle surfaces foliations are invariant up to a constant shift of the defining parameter.

{\bf Worldline stability.} As mentioned above, sets of fundamental photon surfaces form manifold foliation parametrized by values of the impact parameter. Similarly, in the massive case, a set of massive particle surfaces with different energies also forms some spacetime foliation. This foliation can be used, in particular, to obtain a very simple and physically intuitive condition for the stability of worldlines on a surface based on equivalent definition ($ii$)\footnote{The stability of worldlines can also be investigated by the method proposed in \cite{Koga:2020gqd,Koga:2019uqd} without using the foliation of massive particle surfaces with different energies, but the hypersurfaces flow of the same energy and the equivalent definition ($iii$). However, in this case the stability condition will explicitly depend on the choice of the worldline tangent vector.}. Consider a particle with a fixed mass $m$, charge $q$, total energy $\E_v = \E_{vk} + \E_p$, and some four-velocity $v^\alpha$, which is weakly perturbed from the worldline on the massive particle surface $\SE$. The equation of motion projected onto the normal direction is
\begin{align} 
    0 & = n_\beta v^\alpha \LCM_\alpha v^\beta  - q n_\beta {F^\beta}_\lambda v^\lambda =
    v^\alpha \LCM_\alpha v_n - v^\alpha v^\beta \nabla_\alpha n_\beta  - q n_\beta {F^\beta}_\lambda v^\lambda
     = \\\nonumber & =
    - v_n v_\tau^\beta n^\alpha \nabla_\alpha n_\beta
    + v^\alpha \LCM_\alpha v_n
    - v_\tau^\alpha v_\tau^\beta \SFS_{\alpha\beta}
    - q n_\beta {F^\beta}_\lambda v_\tau^\lambda
    + \mathcal{O}(v_n^2),
\end{align}
where $v^\alpha = v_\tau^\alpha + n^\alpha v_n$ is a decomposition of the vector onto the normal direction and tangent to the hypersurface. Since the worldline represents a small deviation from the worldlines from the surface $\SE$, we consider $v_n$ is small. The quantity $v^\alpha \LCM_\alpha v_n$ represents the normal acceleration along the worldline
\begin{align}
    &
      v^\alpha \LCM_\alpha v_n =
      v_\tau^\alpha v_\tau^\beta \SFS_{\alpha\beta}
    + q n_\beta {F^\beta}_\lambda v_\tau^\lambda
    + v_n v_\tau^\beta n^\alpha \nabla_\alpha n_\beta
    + \mathcal{O}(v_n^2) \approx a_{n},
\end{align}
where we have introduced the quantity $a_n$
\begin{equation}
    a_n = \SFS_{\alpha\beta}v_\tau^\alpha v_\tau^\beta+q n^\beta F_{\beta\lambda} v_\tau^\lambda + v_n v_\tau^\beta n^\alpha \nabla_\alpha n_\beta.
\end{equation}
In the limit of the worldlines lying on the surface $\SE$ (i.e., $\E_v=\E$, $v_n=0$), the normal acceleration is zero $a_n=0$. Substituting the second fundamental form of a massive particle surface with energy $\E$ and an arbitrary vector $v^\alpha$, we get
\begin{equation}
    a_n = 
      \frac{\SFS_\tau}{n-2}(-m^2-(n_\alpha v^\alpha)^2+(m\E_{vk}^{\parallel}/\E_k)^2)
      - (q\E_{vk}^{\parallel}/\E_k) n^\rho F_{\rho \lambda} v^\lambda
      + q n^\beta F_{\beta\lambda} v^\lambda
      + v_n v_\tau^\beta n^\alpha \nabla_\alpha n_\beta,
\end{equation}
where $\E_{vk}^{\parallel} \equiv \E_{vk} + k_\perp (n_\alpha v^\alpha)$ is the kinetic energy of the tangential motion. The change of the normal acceleration along the worldline is described by $v^\alpha \nabla_\alpha a_n$:
\begin{align}
    \nabla_v a_n &= 
      \frac{\nabla_v\SFS_\tau}{n-2}(-m^2-(n_\alpha v^\alpha)^2+m^2(\E_{vk}^{\parallel}/\E_k)^2)
    \\\nonumber &
    + \frac{\SFS_\tau}{n-2}(
        - 2 (n_\beta v^\beta) \nabla_v(n_\alpha v^\alpha)
        +m^2\nabla_v(\E_{vk}^{\parallel}/\E_k)^2)
    \\\nonumber &
    +(1 - \E_{vk}^{\parallel}/\E_k) q \nabla_v\left(n^\rho F_{\rho \lambda} v^\lambda\right)
    - q\nabla_v(\E_{vk}^{\parallel}/\E_k) n^\rho F_{\rho \lambda} v^\lambda
    + \nabla_v (v_n v_\tau^\beta n^\alpha \nabla_\alpha n_\beta),
\end{align}
where $\nabla_v \equiv v^\alpha \nabla_\alpha$. Here, we consider $\E$ (and $\E_k$) is a function of the coordinates, since each hypersurface in the foliation has own value, while $\E_v$ is a constant energy of the particle with four-velocity $v^\alpha$, so
\begin{equation}\label{eq:dE1}
    \nabla_v(\E_{vk}^{\parallel}/\E_k) = 
    - \E_k^{-1} \nabla_v\E = - v_n \E_k^{-1} \nabla_n\E,
\end{equation}
where we used the constancy of $\E$ on the surface and omit higher corrections in $v_n$ and $1 - \E_{vk}^{\parallel}/\E_k$. Keeping terms which are only linear in $v_n$ and $1 - \E_{vk}^{\parallel}/\E_k$, we get
\begin{align} \label{eq:dF1}
    \nabla_v a_n &\to (\nabla_v a_n)|_{\SE} = 
    - v_n \E_k^{-1} \left(
    \frac{2 m^2 \SFS_\tau}{n-2}
    - q n^\rho F_{\rho \lambda} v^\lambda
    \right) \nabla_n\E
    \\\nonumber &
    + \left(1 - \E_{vk}^{\parallel}/\E_k\right)
    v_\tau^\alpha \nabla_\alpha \left(
        - \frac{m^2}{n-2} \SFS_\tau
        + q n^\rho F_{\rho \lambda} v^\lambda
    \right)
    + v_\tau^\alpha \nabla_\alpha (v_n v_\tau^\beta n^\alpha \nabla_\alpha n_\beta).
\end{align} 
Eq. (\ref{eq:dF1}) allows to analyze the stability of the worldline. In practice, one can often find that $n^\alpha \nabla_\alpha n_\beta = 0$ \cite{Koga:2020gqd}, and quantities $\SFS_\tau$ and $n^\rho F_{\rho \lambda} v^\lambda$ are constant on the hypersurface. The term $n^\rho F_{\rho \lambda} \tau_{(i)}^\lambda$ describes a non-central magnetic field, which is not plausible in the spacetimes with massive particle surfaces, and we consider this term equal to zero. Using these facts and decomposition (\ref{eq:v_decomposition}), in practice we will use the following expression
\begin{align}\label{eq:dF2}
    \nabla_v a_n &= 
    v_n W_n \E_k^{-1} \nabla_n\E, \qquad
    W_n \equiv
        - \frac{2 m^2 \SFS_\tau}{n-2}
        - q \F \E_k / \kappa^2.
\end{align}
For convex hypersurfaces ($\chi_\tau > 0$), in the chargeless case, the constant $W_n$ is always negative.

This result has a number of advantages over the stability formulas from \cite{Koga:2020gqd,Koga:2019uqd}. First, Eq. (\ref{eq:dF2}) does not contain an arbitrary vector tangent the worldlines, i.e., it characterizes the stability of all worldlines and consequently the massive particle surface by itself. In particular, we can define analogs of antiphoton and photon surfaces \cite{Gibbons:2016isj}. Secondly, the use of this equation does not require the calculation of any new quantities such as the Riemann tensor and so on, except for the derivative of the previously determined surface energy.  Furthermore, this result has a demonstrative physical explanation. Let the foliation is parameterized by some parameter $r$. According to Eq. (\ref{eq:energy}), the total energy $\E$ is a function of the foliation parameter: $\E=\E(r)$. Consider that we analyze a metric, where one can separate the radial equation as $\dot{r}^2 = R(r, L, r)$, where $L$ is a set of other integrals of motion (e.g., total angular momentum), which are also fixed at each $\SE$. Then, $\E(r)$ corresponds to the maximum or minimum of $R$ for some $L$ placed at $r$. Maximums corresponds to unstable orbits and minimums corresponds to stable orbits. If the maximum merges with another minimum at some $r$, such worldlines represent marginally stable orbits (such as innermost stable circular orbit -- ISCO). In a physically meaningful case, the maximum and the minimum meet each other at their highest or lowest values (see an example for Schwarzschild solution in Fig. \ref{fig:schwarzschild}). Otherwise, two curves $R(\E,L_1,r)$ and $R(\E,L_2,r)$ representing effective radial potentials for different $L$'s would intersect each other, which is not a typical physical case. So, $d\E/dr = 0$ distinguishes the marginally stable orbits, separating stable and unstable orbits.

\section{Static examples}
\label{sec:examples}
Let the static four-dimensional spacetime has the following form of the metric tensor
\begin{equation}\label{eq:metric_ansatz}
    ds^2 = -\alpha dt^2 + \gamma d\phi^2 + \lambda dr^2 + \beta d\theta^2,
\end{equation}
where $\alpha, \beta, \lambda, \gamma$ are functions of $r,\theta$ and we choose a surface $r = \text{const}$. Then the second fundamental form is
\begin{equation}
    \chi_{\alpha\beta} = \frac{1}{2\sqrt{\lambda}}\left(
        -\partial_r \alpha dt^2 + \partial_r \gamma d\phi^2 + \partial_r \beta d\theta^2
    \right),\qquad
    \chi = \chi_\alpha^\alpha = \frac{\partial_r \ln (\alpha\beta\gamma)}{2\sqrt{\lambda}},
\end{equation}
and we choose the Killing vector along the asymptotic time coordinate $k^\alpha\partial_\alpha=\partial_t$ (i.e., $\kappa^2 = -\alpha$ and $\E$ represents the total energy of the particle). The Maxwell-related form $\F_{\alpha\beta}$ reads
\begin{equation}
    \mathcal{F}_{\alpha\beta}dx^\alpha dx^\beta = -\frac{\alpha}{\sqrt{\lambda}}\left(
        \partial_r A_t dt + \partial_r A_\phi d\phi 
    \right)dt,\qquad
    \mathcal{F} = \mathcal{F}_{\alpha}^{\alpha} = \frac{\partial_r A_t}{\sqrt{\lambda}}.
\end{equation}

\subsection{Schwarzschild}
The simplest example is the Schwarzschild black hole with mass $M$ defined by the following vacuum metric
\begin{equation}
    ds^2 = -f dt^2 + f^{-1} dr^2 + r^2(d\theta^2 + \sin^2\theta d\phi^2),\qquad f = \frac{r-2M}{r},
\end{equation}
with surfaces $r=\text{const}$ and the Killing vector $k^\alpha\partial_\alpha=\partial_t$. Indeed, spheres are umbilic with the mean curvature
\begin{equation}
    \chi_\tau = \frac{2f^{1/2}}{r}.
\end{equation}
The second fundamental form is diagonal. Substituting other necessary quantities
\begin{align}
    &
    \chi = f^{-1/2}\frac{2 r-3 M}{r^2},\quad
    \kappa^2 = -f,\quad
    K = 2f^{-1/2}\frac{-r+3M}{r^2},\quad
    \F = 0,\quad
    k^\alpha A_\alpha = 0
\end{align}
in Eq. (\ref{eq:energy}) we find the following expression for the energy of massive particle surfaces
\begin{equation}\label{eq:energy_sch}
    \E^2 / m^2 = \frac{(r - 2 M)^2}{r (r - 3 M)}.
\end{equation}
Innermost stable circular orbits (ISCO) defined from the condition $d\E/dr = 0$ has the radius $r_{\text{ISCO}}=6M$ with $\E^2 / m^2 = 8/9$. Expression (\ref{eq:energy_sch}) diverges at $r_\text{ph}=3M$ which corresponds to the photon surface (see Ref. \cite{Claudel:2000yi}).

Let us compare this result with the dynamical approach. The conserving total energy and total angular momentum read
\begin{equation}
    \E = -\dot{\gamma}^\alpha k_\alpha = f \dot{\gamma}^t,\qquad
    L^2 = r^4\left[\left(\dot{\gamma}^{\theta }\right)^2 + \left(\dot{\gamma}^{\phi}\sin\theta\right)^2\right].
\end{equation}
Substituting these into the equation $\dot{\gamma}^2=-m^2$, we get the radial equation
\begin{equation}\label{eq:potential_sch}
    \dot{r}^2 = R(\E, L, r) \equiv \E^2 - \frac{r-2 M}{r^3} \left(L^2 + m^2 r^2\right).
\end{equation}
The conditions $\dot{r}=0$ and $\ddot{r}=0$ (i.e., $R=0$ and $\partial_r R = 0$) leads to the following integrals of motion as a function of the sphere radius
\begin{equation}
    \E^2/m^2 = \frac{(r-2 M)^2}{r (r-3M)},\qquad
    L^2/m^2 = \frac{M r^2}{r-3 M}.
\end{equation}
The condition that distinguishes ISCO, $\partial_r^2 R = 0$, approves the previous result $r_{\text{ISCO}}=6M$. The angular momentum of ISCO orbits is $L=2\sqrt{3}mM \approx 3.46 mM$.

The squared total energy function (\ref{eq:energy_sch}) and $R(0,L,r)$ from Eq. (\ref{eq:potential_sch}) for different $L$ are given in Fig. \ref{fig:schwarzschild}. Function $-\E^2/m^2$ goes through maxima and minima of $R(0,L,r)$ for certain values of $L$. It is defined for $r>3M$, increasing in the region of unstable orbits $3M < r < 6M$ and decreasing in the region of the stable orbits $r > 6M$. At $r=6M$ it has a maximum $-8/9$ corresponding to marginally stable orbits (namely, ISCO).

\begin{figure*}[tb!]
    \centering
        \includegraphics[width=0.6\textwidth]{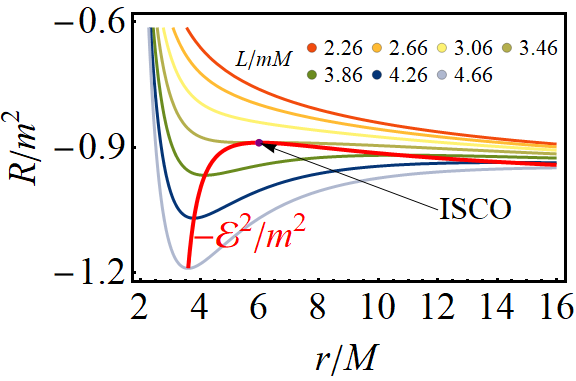}
          \caption{The squared total energy (\ref{eq:energy_sch}) and $R(0,L,r)$ from Eq. (\ref{eq:potential_sch}) as a function of $r$ for different $L$ in Schwarzschild spacetime.\label{fig:schwarzschild}}
\end{figure*}

\subsection{Newman-Unti-Tambourino}
The solutions of vacuum general relativity with mass $M$ and gravitomagnetic Newman-Unti-Tambourino charge $N$ (NUT) possess Killing vectors forming an algebra of a spherical symmetry and time translations, though it is not evident from the form of the metric
\begin{equation}
    ds^2 = -f (dt+2N\cos\theta d\phi)^2 + f^{-1} dr^2 + (r^2+N^2)(d\theta^2 + \sin^2\theta d\phi^2),\qquad f = \frac{r(r-2M)-N^2}{r^2+N^2}.
\end{equation}
The algebra of the spherical symmetry can be exponentiated up to the group, if one endows surfaces $r=\text{const}$ with the topology of three-spheres $\mathds{S}^3$ \cite{Misner:1963fr}, forcing the time to be compact. This metric is not diagonal and does not fit ansatz (\ref{eq:metric_ansatz}). However, one can check that the surface $r=\text{const}$ is umbilic in the sector orthogonal to the Killing vector $\partial_t$, i.e., $\tau_{(1)}^\alpha\partial_\alpha = \partial_\theta$, $\tau_{(2)}^\alpha\partial_\alpha = \partial_\phi-2N\cos\theta\partial_t$. The mixed components $\kappa^\alpha\chi_{\alpha\beta}\tau^\beta_{(i)}$ are not presented. Thus, one can apply the master equation from Eq. (\ref{eq:energy}). Performing the same steps as for the Schwarzschild metric, we get
\begin{subequations}
\begin{equation}
    \chi_\tau = \frac{2 r}{r^2 + N^2}f^{1/2},
\end{equation}
\begin{equation}
    \chi =f^{-1/2} \frac{2 r^3-M \left(N^2+3 r^2\right)}{\left(N^2+r^2\right)^2},
\end{equation}
\begin{equation}
    K = 2f^{-1/2}\frac{-r^3 + 3 M r^2 + 3 N^2 r - M N^2 }{\left(N^2+r^2\right)^2},
\end{equation}
\begin{equation}\label{eq:energy_nut}
    \E^2 / m^2 = \frac{
        r \left(r (2 M-r)+N^2\right)^2
    }{
        \left(N^2+r^2\right) \left(r^3 -3 M r^2 - 3 N^2 r + M N^2\right)
    }.
\end{equation}
\end{subequations}
The equation on the marginally stable orbits is a polynomial equation of degree six and cannot be resolved explicitly. Function (\ref{eq:energy_nut}) is depicted in Fig. \ref{fig:nut} for different values of $N$. Also, there is a curve $(r_\text{ISCO},\E_\text{ISCO})$ parameterized by $N$ found numerically.

\begin{figure*}[tb!]
    \centering
        \includegraphics[width=0.6\textwidth]{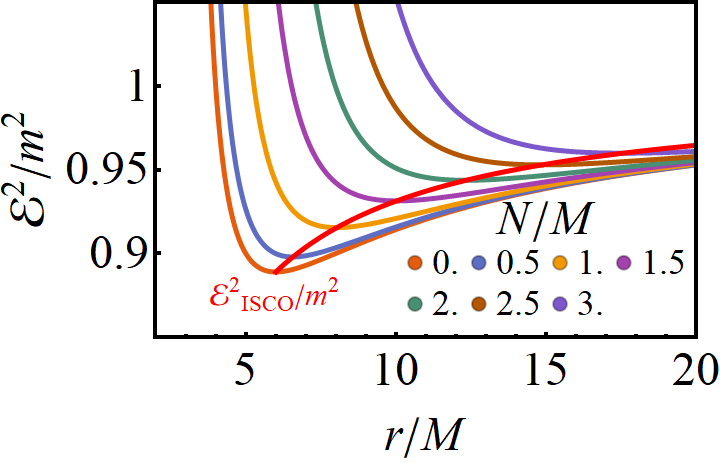}
          \caption{The squared total energy (\ref{eq:energy_nut}) as a function of $r$ and squared energy corresponding to ISCO for different $N$ in NUT spacetime.\label{fig:nut}}
\end{figure*}

\subsection{Fisher-Janis-Newman-Winicour}
The Fisher-Janis-Newman-Winicour solution (FJNW) is a Schwarzschild generalization with a scalar charge $\Sigma$ \cite{Ota:2021mub}. The FJNW metric reads
\begin{equation}
    ds^2 = -f^\sigma dt^2 + f^{-\sigma} dr^2 + f^{1-\sigma}r^2(d\theta^2 + \sin^2\theta d\phi^2),\qquad
    f = 1 - \frac{2M}{\sigma r},\qquad
    \sigma = \frac{M}{\sqrt{M^2 + \Sigma^2}}.
\end{equation}
One can restore Schwarzschild solution with $\sigma=1$. For $0<\sigma<1$ the solution is singular at $r=2M/\sigma$. Similarly, we consider surfaces $r=\text{const}$ with a Killing vector $\partial_t$. The total energy function (\ref{eq:energy}) of the massive particle surfaces is defined by
\begin{equation}\label{eq:energy_fisher}
    \E^2/m^2 = \left(1-\frac{2 M}{r \sigma }\right)^{\sigma } \frac{r \sigma - M (\sigma + 1)}{r \sigma - M (2 \sigma + 1)}.
\end{equation}
Differentiating the expression for $\E^2/m^2$, one will find two extrema at
\begin{equation}
    r_\pm = M \frac{3 \sigma +1  \pm \sqrt{ 5 \sigma^2 - 1 }}{\sigma},
\end{equation}
where the sign $+$ ($-$) stands for a minimum (maximum). At $\sigma = 1/\sqrt{5}$ the minimum and the maximum merge and disappear (Fig. \ref{fig:fisher}). The photon surface, defined by the divergence of the expression in Eq. (\ref{eq:energy_fisher}), is placed at $r_{\text{ph}}=M(2\sigma+1)/\sigma$ for $1/2\leq \sigma \leq 1$ \cite{Claudel:2000yi}. Thus, we have four cases:
\begin{enumerate}
    \item $1/2 \leq \sigma \leq 1$. There is a photon surface at $r_{\text{ph}}=M(2\sigma+1)/\sigma$ and ISCO at $r_\text{ISCO}=r_+$. Surfaces $r_\text{ph} \leq r < r_\text{ISCO}$ are unstable and $r_\text{ISCO} < r$ are stable.
    \item $1/\sqrt{5} < \sigma < 1 / 2$. There are no photon surfaces and two marginally stable orbits $r_\pm$ exist. Stable orbits are placed at $2M/\sigma < r < r_-$ and $r_+ < r$ and unstable orbits are placed at $r_- < r < r_+$.
    \item $0\leq\sigma=1/\sqrt{5}$. There are no photon surfaces, and a degenerate marginally stable orbit at $r=M(3+\sqrt{5})$, and for other $2M\sqrt{5} < r$ orbits are stable.
    \item $0\leq\sigma<1/\sqrt{5}$. There are no photon surfaces and marginally stable orbits. All orbits $2M/\sigma < r$ are stable.
\end{enumerate}

\begin{figure*}[tb!]
    \centering
        \includegraphics[width=0.6\textwidth]{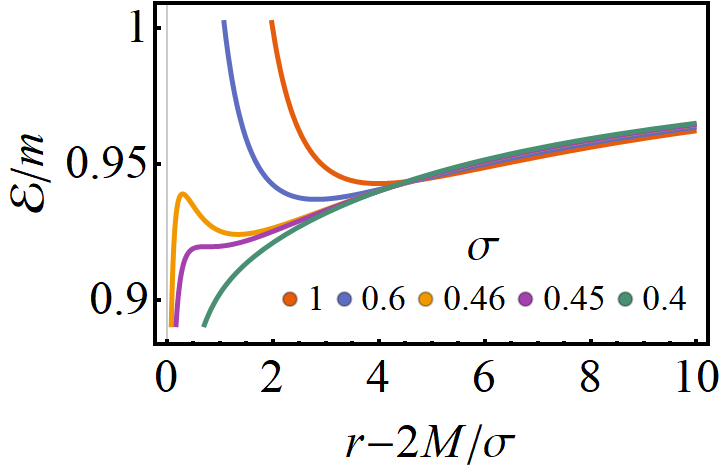}
        \caption{The total energy (\ref{eq:energy_fisher}) as a function of $r$ for different $\sigma$ in FJNW spacetime.\label{fig:fisher}}
\end{figure*}

\subsection{EMD dyons with stable photon spheres}
Einstein-Maxwell-dilaton (EMD) dyons with mass $M$, NUT $N$, scalar charge $D$ and electric and magnetic charges $Q$ and $P$ where revisited in Ref. \cite{Bogush:2020obx}, where one can find the full definition of the solution. The scalar charge $D$ is constrained by a cubic equation for regular solutions. Let us consider neutral massive particles placed at the surface $r=\text{const}$ with a total energy defined with respect to the Killing vector $\partial_t$. In Ref. \cite{Bogush:2020obx} stable photon surfaces are found for some special classes of the naked singularities of the theory. Stable photon surfaces indicates that the solution is unstable. We will analyze these solution in order to trace the behavior of the massive particle surfaces in spacetimes with stable photon surfaces. For example, we will consider solutions with the following charges
\begin{equation}
    N/M=0,\qquad
    Q/M = -1.5 \cos\epsilon \approx -1.5,\qquad
    P/M = 1.5 \sin\epsilon \approx 1.5 \epsilon
\end{equation}
with some small $\epsilon = 0.0998,\,0.11,\,0.125$ (Fig. \ref{fig:emd}). For the first two values $\epsilon = 0.0998,\,0.11$ we have a similar structure (orange and blue curves). Namely, we have 2 unstable photon surfaces (let us denote them as $r_{\text{ph}}^{\pm \text{u}}$ with $r_{\text{ph}}^{- \text{u}} < r_{\text{ph}}^{+ \text{u}}$) and one stable in-between (let it be $r_{\text{ph}}^\text{s}$). Also, there are two marginally stable obits $r_\text{min}^{1,2}$ such that $r_{\text{ph}}^{- \text{u}} < r_{\text{min}}^{1} < r_{\text{ph}}^{\text{s}}$ and $r_{\text{ph}}^{+ \text{u}} < r_{\text{min}}^{2}$. In the region $r_{\text{ph}}^{\text{s}} < r < r_{\text{ph}}^{+ \text{u}}$ there are no massive particle surfaces at all. In the third case $\epsilon = \,0.125$ (yellow curve), two photon orbits $r_{\text{ph}}^{\text{s}}$, $r_{\text{ph}}^{+ \text{u}}$ disappear, and one more marginally stable orbit $r_\text{max}$ appears, corresponding to the maximum of the yellow curve. It is notable that near the stable photon surface $r_{\text{ph}}^{\text{s}}$, the energy of the surface runs from its minimal value to $+\infty$ in a very narrow interval of $r$, populating this region with stable surfaces with all possible energies. For the smaller values of $\epsilon$, the solution provides itself with an event horizon, covering the stable photon surface $r_{\text{ph}}^{\text{s}}$.
\begin{figure*}[tb!]
    \centering
        \includegraphics[width=0.6\textwidth]{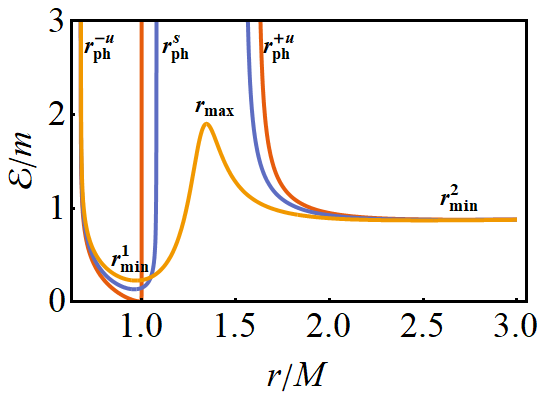}
          \caption{The total energy (\ref{eq:energy}) as a function of $r$ in EMD dyon without NUT $N=0$ for different values of $\epsilon$: (orange) 0.0998, (blue) 0.11, and (yellow) 0.125. \label{fig:emd}}
\end{figure*}

\subsection{Reissner-Nordstr\"{o}m dyon}
The dyonic Reissner-Nordstr\"{o}m solution with mass $m$, electric charge $Q$ and magnetic charge $P$ reads
\begin{equation}
    ds^2 = -\frac{\Delta}{r^2} dt^2 + \frac{r^2}{\Delta} dr^2 + r^2(d\theta^2 + \sin^2\theta d\phi^2),\qquad
    \Delta = r(r-2M) + Q^2 + P^2,
\end{equation}
\begin{equation}
    A_\mu dx^\mu = \frac{Q}{r} dt + P \cos\theta d\phi.
\end{equation}
Applying Eq. (\ref{eq:energy}), the solution for $\E$ reads
\begin{equation}
    \E_\pm/m = 
    -\frac{qQ}{2 m r}
    + \frac{
            (qQ/m) \left( M r - P^2 - Q^2 \right)
            \pm 2 \Delta \sqrt{ \Delta - M r + P^2 + Q^2 + (qQ/2m)^2 }
        }{
            2 r \left(r^2-3 M r+2 P^2+2 Q^2\right)
        }.
\end{equation}
The photon sphere is placed at the root of the denominator \cite{Claudel:2000yi} 
\begin{equation}
    r_{\text{ph}}=\frac{1}{2} \left(3 M + \sqrt{9 M^2-8 (Q^2+P^2)}\right).
\end{equation}
The marginally stable orbits determined from the condition $d\E_+/dr = 0$ are depicted in Fig. \ref{fig:rn}. The structure of the corresponding curves differs for subextremal and superextremal cases. In the degenerate case, the curve is tangent to the vertical line, and $d^2\E/dr^2 = 0$. The region in the upper left corner is the region of the stable coordinates. The left upper corner is characterized by the absence of the stability. Thus, the charges corresponding to these values cannot be stable at any distance from the center of the solution. The straight line at $qQ/mM=1$ with $M^2 = P^2 + Q^2$ corresponds to the Bogomol'nyi–Prasad–Sommerfield (BPS) solutions \cite{Scherk:1979aj,Ortin:2015hya}, which are known to satisfy the no-force condition. In this case, the test particle and the central object do not interact with each other, and the energy of the surfaces are the same for any radius $r$. The curve $\sqrt{(P^2 + Q^2)/M^2} = 0$ is not a straight horizontal line, because it is understood as a limit $Q \to 0$ with a finite value of $qQ$.

\begin{figure*}[tb!]
    \centering
        \includegraphics[width=0.6\textwidth]{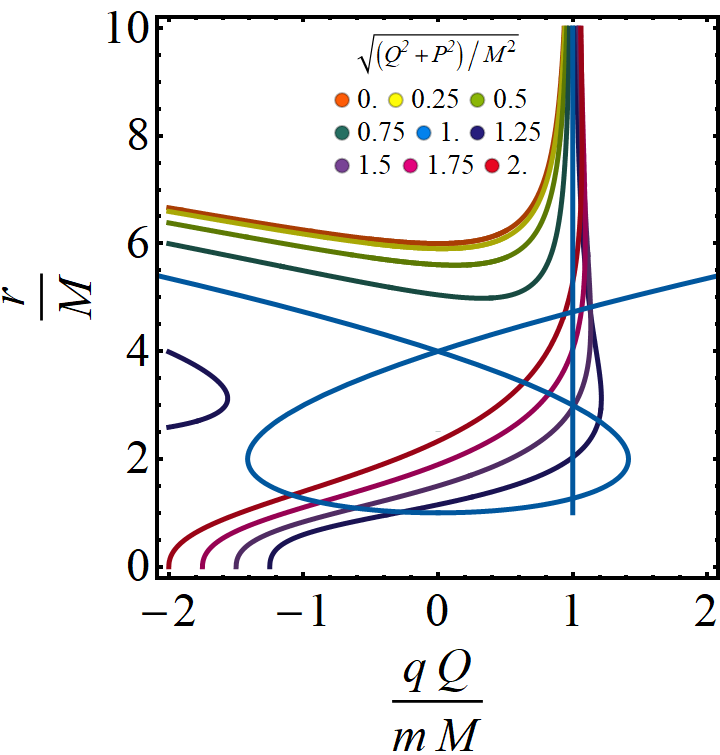}
          \caption{The position of the marginally stable orbits as a function of $qQ/mM$ for different values of $\sqrt{(Q^2 + P^2)/M^2}$ in Reissner-Nordstr{\"o}m spacetime. \label{fig:rn}} 
\end{figure*}

\section{Conclusions}
We proposed a generalization of the photon  surfaces introduced by Claudel, Virbhadra, and Ellis to the case of massive charged particles. An important new feature of the massive case is that the conformal non-invariance of the corresponding worldlines makes it necessary to restrict particles with fixed energy, assuming the existence of the corresponding Killing symmetry. With this in mind, we define new surfaces of massive particles as timelike hypersurfaces such that any worldline of a particle with mass $m$, electric charge $q$, and total energy $\E$, initially touching the surface, will remain tangent to it forever.

We have established the key theorem \ref{th:theorem}, which is a complete analog of theorem 2.2 obtained in Ref. \cite{Claudel:2000yi} for photon surfaces. The most important point of the theorem is the statement ($iii$) which describes the pure geometry of a massive particle surface without references to the worldline dynamic equations and it represents a modification of the totally umbilic condition for photon surfaces. Such equivalent definition of massive particle surfaces is an effective way to analyze their geometry for non-integrable dynamical systems and to study general theoretical problems such as the possibility of constructing Penrose inequalities for massive particle surfaces spatial sections and uniqueness theorems for asymptotically flat spacetimes with massive particle surfaces. 

Furthermore, we have established that statement ($iii$) is nothing else than the condition of hypersurface partial umbilicity, i.e., exactly the same property that defines the fundamental photon surfaces in stationary spacetimes. The only difference between these cases is that instead of an impact parameter, the geometry of massive particle surfaces is determined by energy, and the presence of an electromagnetic field imposes additional conditions on the mixed components of the second fundamental form. In addition, massive particle surfaces usually have no boundaries. The similarity between fundamental photon and massive particle surfaces also lies in the fact that, just as the former form photon regions, the latter also form foliations of a spacetime locally parameterized by energy values.

We have derived a master equation governing the energy of the surface. In the case of neutral particles, the energy of the surface depends on the Killing vector projection and the mean curvature of the surface in directions orthogonal to the Killing vector. For charged particles, the energy of the surface also includes the force of the electric field acting on the particle and the potential energy of the particle. We have found that photon surfaces can be also massive particle surfaces for repelling charges.  

The condition of the stability of worldlines lying on the massive particle surface was derived. In practice, it appeals to differentiation of the surface energy along the flow of the massive particle surfaces and does not depend on the velocities of individual particles lying in them in practice. In particular, an extremum of the energy corresponds to marginally stable orbits.

We have considered a number of examples of well-known electrovacuum and Einstein-Maxwell-dilaton solutions, demonstrating the application of the the developed instrument. It was shown to be helpful for finding marginally stable orbits, regions of stable or unstable spherical orbits, stable and unstable photon surfaces, and solutions satisfying the no-force condition.  

We hope that massive particle surfaces, their geometric definition, and possible generalizations to the case of spacetimes with Kerr-type rotation will be useful in a variety of theoretical applications. In particular, we expect that it will be possible to obtain Penrose inequalities, uniqueness theorems, as well as to establish connections with explicit and hidden symmetries of spacetimes, integrability, and mechanisms for the formation of shadows created by scattering of massive charged particles.
\begin{acknowledgments}
The work was supported by the Russian Foundation for Basic Research on the project 20-52-18012 Bulg-a, and the Scientific and Educational School of Moscow State University ``Fundamental and Applied Space Research''. I.B. is also grateful to the Foundation for the Advancement of Theoretical Physics and Mathematics ``BASIS'' for support.  
\end{acknowledgments}

\appendix
\section{Case \texorpdfstring{$\kappa^2 = 0$}{}}
\label{sec:kh}

In case $\kappa^2=0$ (but $\kappa^\alpha \neq 0$), we can introduce the following decomposition 
\begin{align}
&v^\alpha=\tilde{\E}_k \kappa^\alpha + \E_k \Tilde{\kappa}^\alpha  + u^\alpha, \quad  \kappa_\alpha  u^\alpha=\Tilde{\kappa}_\alpha  u^\alpha=0, \quad \kappa_\alpha \Tilde{\kappa}^\alpha=-1, \quad  \Tilde{\kappa}_\alpha \Tilde{\kappa}^\alpha=0, \\
& u^2 - 2 \tilde{\E}_k \E_k=-m^2,\qquad -\tilde{\kappa}_\alpha v^\alpha \equiv \tilde{\E}_k = (m^2 + u^2)/(2\E_k).
\end{align}
The orthogonal complement can again contain only spacelike vectors $2 \tilde{\E}_k \E_k>m^2$ and in particular again $\E_k, \tilde{\E}_k\neq0$.

The second fundamental form can be decomposed into
\begin{align}
    &
    \SFS_{\alpha\beta} =
      \alpha_{++} \kappa_\alpha  \kappa_\beta 
    + \alpha_{+-} \kappa_{(\alpha}  \tilde{\kappa}_{\beta)} 
    + \alpha_{--} \tilde{\kappa}_\alpha  \tilde{\kappa}_\beta 
    + \kappa_{(\alpha} \beta_{\beta)}
    + \tilde{\kappa}_{(\alpha} \tilde{\beta}_{\beta)}
    +\lambda_{\alpha\beta}
    +(q/\E_k)\F_{\alpha\beta}, 
    \\\nonumber&
    \kappa^\alpha\lambda_{\alpha}{}_{\beta} =
    \tilde{\kappa}^\alpha\lambda_{\alpha}{}_{\beta} = 0, \quad
    \kappa^\alpha\beta_\alpha = \tilde{\kappa}^\alpha\beta_\alpha = 0,
\end{align}
where the last term in $\chi_{\alpha\beta}$ was introduced to compensate the right hand side in Eq. (\ref{eq:chi_condition}), giving the following condition
\begin{align}
      \alpha_{++} \E_k^2
    + 2\alpha_{+-} \E_k \tilde{\E}_k
    + \alpha_{--} \tilde{\E}_k^2
    - 2 \E_k \beta_{\alpha} u^\alpha
    - 2 \tilde{\E}_k \tilde{\beta}_{\alpha} u^\alpha
    +\lambda_{\alpha\beta} u^\alpha u^\beta
    = 0.
\end{align}
It should hold for any $\tilde{\E}_k$ and $u^\alpha$ satisfying the norm. From the $u^\alpha$-parity analysis, it splits into two parts
\begin{align}
    &
      \E_k \beta_{\alpha} u^\alpha
    + \tilde{\E}_k \tilde{\beta}_{\alpha} u^\alpha
    = 0,
    \\\nonumber &
      \alpha_{++} \E_k^2
    + 2\alpha_{+-} \E_k \tilde{\E}_k
    + \alpha_{--} \tilde{\E}_k^2
    + \lambda_{\alpha\beta} u^\alpha u^\beta
    = 0
\end{align}
Let us substitute $\tilde{\E}_k$ explicitly and perform a scaling transformation $u^\alpha \to k u^\alpha$:
\begin{subequations}
\begin{align}
    & \label{eq:null_condition_scaled_1}
      2\E_k^2 \beta_{\alpha} u^\alpha
    +(m^2+k^2 u^2) \tilde{\beta}_{\alpha} u^\alpha
    = 0,
    \\ & \label{eq:null_condition_scaled_2}
      \alpha_{++} \E_k^2
    + (m^2+k^2u^2)\alpha_{+-}
    + \alpha_{--} \frac{(m^2+k^2u^2)^2}{4\E_k^2}
    + k^2  \lambda_{\alpha\beta} u^\alpha u^\beta
    = 0.
\end{align}
\end{subequations}
Definitely, we can just take certain values of $k$ and then add and subtract equations with different $k$. Instead, we will differentiate the equation with respect to $k^2$. Differentiating Eq. (\ref{eq:null_condition_scaled_1}), we come to the conclusion that two terms are equal to zero separately $\beta_{\alpha} u^\alpha=\tilde{\beta}_{\alpha} u^\alpha=0$, and since $u^\alpha$ is arbitrary, each co-vector is zero: $\beta_\alpha=\tilde{\beta}_\alpha = 0$. Twice differentiating Eq. (\ref{eq:null_condition_scaled_2}) leads to $\alpha_{--}=0$. And differentiation of Eq. (\ref{eq:null_condition_scaled_2}) once gives
\begin{align}
      u^2 \alpha_{+-}
    + \lambda_{\alpha\beta} u^\alpha u^\beta
    = 0.
\end{align}
Similarly to the analysis for $\kappa^2\neq0$, we find the tensor $\lambda_{ab} = - \alpha_{+-}(h_{\alpha\beta} + \kappa_{(\alpha}\tilde{\kappa}_{\beta)})$. Substituting this back into Eq. (\ref{eq:null_condition_scaled_1}) we get $\alpha_{++} = -  \alpha_{+-} m^2 / \E_k^2$. Collecting all terms together gives the final expression for the second fundamental form
\begin{equation}
    \chi_{\alpha\beta} = 
    -\alpha_{+-} 
    \left(
          h_{\alpha\beta}
        + \frac{m^2}{\E_k^2} \kappa_\alpha  \kappa_\beta
    \right)
    +(q/\E_k)\F_{\alpha\beta}.
\end{equation}
After the redefinition of $\alpha_{+-}$, one will arrive to the same expression as for $\kappa^2\neq0$.

\end{document}